\newtheorem{theorem}{Theorem}
\newtheorem{lemma}{Lemma}
\newtheorem{example}{Example}
\newtheorem{corollary}{Corollary}
\newtheorem{proposition}{Proposition}
\newenvironment{proof}{\begin{trivlist}
                       \item[]{\bf Proof}
                       \hspace{0cm} }{\hfill {\large $\bullet$}
                       \end{trivlist}}
\newcommand{\be}{\begin{eqnarray}}
\newcommand{\ee}{\end{eqnarray}}
\newcommand{\ignore}[1]{}
\title{State-deterministic Finite Automata with Translucent Letters and
Finite Automata with Nondeterministically Translucent Letters}
\author{Benedek Nagy
\institute{Department of Mathematics, Faculty of Arts and Sciences\\
Eastern Mediterranean University\\
99628 Famagusta, North Cyprus, Mersin-10, Turkey\\ and \\
Department of Computer Science, Institute of Mathematics and Informatics,\\ Eszterh\'azy K\'aroly Catholic University\\ Eger, Hungary}
\email{nbenedek.inf@gmail.com}
}
\begin{document}
\maketitle

\begin{abstract}
Deterministic and nondeterministic finite automata with translucent letters were introduced by Nagy and Otto more than a decade ago as Cooperative Distributed systems of a kind of stateless restarting automata with window size one. These finite state machines have a surprisingly large expressive power: all commutative semi-linear languages and all rational trace languages can be accepted by them including various not context-free languages. While the nondeterministic variant defines a language class with nice closure properties, the deterministic variant is weaker, however it contains all regular languages, some non-regular context-free languages, as the Dyck language, and also some languages that are not even context-free. In all those models for each state, the letters of the alphabet could be in one of the following categories: the automaton cannot see the letter (it is translucent), there is a transition defined on the letter (maybe more than one transition in nondeterministic case) or none of the above categories (the automaton gets stuck by seeing this letter at the given state and this computation is not accepting).

State-deterministic automata are recent models, where the next state of
the computation determined by the structure of the automata and it is independent of the processed letters. In this paper our aim is twofold, on the one hand, we investigate
state-deterministic  finite automata with translucent letters. These automata are specially restricted deterministic  finite automata with translucent letters.

In the other novel model we present, it is allowed that for a state the set of translucent letters and the set of letters for which transition is defined are not disjoint. One can interpret this fact that the automaton has a nondeterministic choice for each occurrence of such letters to see them (and then erase and make the transition) or not to see that occurrence at that time. Based on these semi-translucent letters, the expressive power of the automata increases, i.e., in this way a proper generalization of the previous models is obtained.
\end{abstract}
\textbf{Keywords:} finite state machines, automata with translucent letters, determinism vs. nondeterminism, state-determinism

\section{Introduction}

The history of automata with translucent letters has begun using the technical name
cooperative distributed systems of stateless restarting automata with window size one \cite{LATAtrace}, while the term finite state acceptors with translucent letters appeared in \cite{BILC} reinterpreting the aforementioned technical name.
Basically (formal definitions will be recalled in Section~\ref{sec:prelim}), in a finite automaton with translucent letters, in each state some of the letters of the alphabet are translucent, and the automaton sees the first occurrence of a non-translucent letter (after the occurrences of translucent letters in the prefix of the remaining input in the given configuration, if any) and if there is a transition defined on this letter (say, the letter is readable) in the actual state, after erasing this letter, the next state is chosen according to the transition function, and the computation continues. It may happen that there are only translucent letters (or no letters at all) in the remained input, then the computation is accepting if the actual state is a final state. Automata with translucent letters can be applied in linguistics \cite{NaKo14}, and also modelling various trace languages used to describe parallel processes \cite{DiRo95,JanickiKKM19,MateescuSY00,IWCIA17} based on commutations and partial commutations \cite{CartierFoata69,Benedek09}.

In fact, there are various models in automata theory where the processing on the input may not go strictly left to right. One of these models is the restarting automata which is developed for linguistical motivation to do analysis by reduction: in a nutshell, these automata have a read-write window and they are searching for some specific pattern in the window to reduce, i.e., shorten its content, and then they are restarting the computation on the new content of the tape. It may also happen that the automaton accepts based on what is in its window. Interested readers may be referred to \cite{restart,restar} to see the various models, their computations, accepted languages and their properties. Restarting R automata with window size one can do only one type of reduction, to erase the letter in the window, hereby shortening the tape. Stateless deterministic variants of them are the simplest models of restarting automata. Instead of adding states to the system, their cooperative distributed systems (shortly CD systems) are developed \cite{LATAtrace,JCSS} and found to be very interesting with a surprisingly large expressive power as, e.g., they are able to accept all rational trace languages. The components of such systems play the role of the states in the reinterpreted model, in the nondeterministic finite automata with translucent letter. Two types of deterministic models of the CD systems of  restarting R automata with window size one are also studied \cite{LATA2011}: In strictly deterministic models, the next component is uniquely defined by the actual component and it does not depend on the letter being processed (i.e., erased) in the actual computation step.
However, in globally deterministic CD systems of  restarting R automata with window size one
the next component is deterministically chosen based on the actual component and on the erased letter, somewhat similarly as it is in the usual deterministic finite automata. Consequently,
this letter model is equivalent, by the reinterpretation, to the deterministic finite automata with translucent letters.

Other models not consuming the input from left to right are various 2-head models that process the input parallely from both extremes \cite{WK,linAUTO,DNA08,2-head,JLC,DNAbook}. Some of these finite state models are capable to accept exactly the linear context-free languages. Moreover, in the bio-inspired models named $5'\to 3'$ Watson-Crick finite automata, the automaton with its both heads may read strings in a computation step \cite{CiE09,NaCo-Shag,AFL-SH}.
There are various interesting concepts related to determinism introduced and studied for these $5'\to 3'$ Watson-Crick finite automata models.
   The deterministic variant where the concept of determinism fits well to the usual concept of determinism is less powerful in the sense that only a proper subset of the class of the linear context-free languages can be accepted by them. This class is called 2detLIN, and it is incomparable with the class detLIN containing the languages that are accepted by deterministic one-turn pushdown automata \cite{ActInf-Sh,UCNC-Sh}.
 The model where the next state is uniquely defined by the actual state and does not depend on what is being processed from the input in this step of computation is called state-deterministic and studied in \cite{NaCo-sd}. At $5'\to 3'$ Watson-Crick finite automata, the state-deterministic variants are very restricted, but they may do some nondeterministic computations, and thus, the language class accepted by them is incomparable with 2detLIN.
Other type of determinism, the quasi-determinism is introduced and studied in \cite{NCMA22}.
 In these automata, even if the state of the next configuration is uniquely determined by the actual configuration (actual state and remaining input), there could be more than one possible next configuration.
The quasi-deterministic $5'\to 3'$ Watson-Crick finite automata accept a superclass of languages of both the classes accepted by state-deterministic and by deterministic $5'\to 3'$ Watson-Crick finite automata.

There are other models of computations which process the input not strictly left to right, including various jumping automata \cite{JUMP-Sz,jump,5-3jumping} %
and input revolving automata \cite{input-revolving}, just to mention a few more models. These models became very popular in the last decades.
 The combination of the mentioned 2-head finite state model with translucent letters allow to accept all linear trace languages \cite{LINtrace}. Pushdown automata with translucent letters can be used to characterise context-free trace languages \cite{CFtrace}.

In this paper, on the one hand, we investigate the state-deterministic finite automata with translucent letters and give some results on the class of languages accepted by them.

On the other hand, we investigate finite automata with translucent letters by relaxing the condition that for a state the set of translucent and readable letters is disjoint. In this way, the translucency becomes nondeterministic and thus, we may expand both the deterministic and nondeterministic finite automata with translucent letters to allow nondeterministic translucency. Our other main result is that we can show that this model is a real expansion of the basic models, the class of accepted languages is a %
  superclass of the class languages of the original model.

Recently another extension of the finite automata with translucent letters was investigated in which in the computation the head is not restarting after erasing a symbol, but continues from the position where this letter has been erased \cite{Fran-Fred}. We believe that our new type of restrictions and  extensions are also giving some new interesting insights and results to this particular field of automata theory.

The structure of the paper is as follows. In the next section we recall some formal preliminaries and the basic definitions of finite automata with translucent letters.
 Section~\ref{sec:stateD} is devoted to a restricted class of deterministic models, namely, to the state-deterministic finite automata with translucent letters;
while in Section~\ref{sec:def} we present our other new concept by allowing nondeterminism based on translucency. We show that this model is more powerful than the original model, however, still only semi-linear languages can be accepted.
 Finally, conclusions close the paper.

\section{Preliminaries}\label{sec:prelim}

We assume that the reader is familiar with the basic concepts of formal languages and automata \cite{Harrison78,HopcroftUllman79}, however, to fix our notations, we formally recall some basic definitions.
We denote the empty word by $\lambda$.

We say that the languages $L_1$ and $L_2$ are letter equivalent, if for any word $x\in L_1$ we may find a word $y\in L_2$ such that $y$ is obtained from $x$ by reordering (permuting) its letters and also  for any word $x\in L_2$ we may find a word $y\in L_1$ with the same property. It is known that a language is \emph{semi-linear} if there is a regular language that is letter
equivalent with it. All context-free languages are semi-linear %
\cite{Parikh61} and there are context-sensitive languages that are not semi-linear.
We do not detail here partial commutations, commutations, traces and trace languages, interested readers may be referred to \cite{CartierFoata69,DiRo95} and for their relations to automata with translucent letters to \cite{LATAtrace,CFtrace,LINtrace}.

A \emph{nondeterministic finite automaton} (NFA)
is a pentuple $A = (Q,\Sigma,I,F,\delta)$,
where $Q$ is the finite set of internal states,
$\Sigma$ is the finite alphabet containing the input letters,
$I\subseteq Q$ is the set of initial states,
$F\subseteq Q$ is the set of final (or accepting) states,
and $\delta:Q\times \Sigma \to 2^Q$ is the transition relation.
If $|I|=1$ and $|\delta(q,a)|\le 1$ holds for all $q\in Q$ and all $a\in\Sigma$,
then $A$ is a \emph{deterministic finite automaton} (DFA).
Notice that, in general, in NFAs we allow multiple initial states, but we do not allow transitions by the empty word.

An NFA $A$ works as follows. Let an input string $w\in\Sigma^*$ be given,
then $A$ starts its computation in a state $q_0$ that is chosen
nondeterministically from the set $I$ of all initial states.
This configuration is encoded as $q_0w$ (for simplicity, we may assume that $Q\cap \Sigma = \emptyset$).
Now $A$ reads the first letter of $w$, say $a$ (let $w=au$), thereby deleting (consuming) this occurrence of letter $a$,
and it changes its internal state to a state $q_1$ that is chosen nondeterministically
from the set $\delta(q_0,a)$, formally we may write that the new configuration $q_1 u$ is reached.
However, it may happen that $\delta(q_0,a)$ is empty, then $A$ gets stuck and this computation fails in this input. Otherwise, $A$ continues the computation from the configuration $q_1 u$ by
reading the input letter by letter until either $w$ has been consumed completely or the computation fails (similarly as we have described).
We say that $A$ accepts $w$  from the initial configuration $q_0 w$ if it reaches a configuration $q_f \cdot \lambda$ in a computation starting from $q_0 w$, where  $q_f\in F$ is a final state. %
By $L(A)$ we denote the set of all strings $w\in\Sigma^*$ for which
$A$ has an accepting computation in the sense described above.

It is well-known that the class %
 of languages %
 that are accepted
by NFAs coincides with the class of regular languages, and that
DFAs accept exactly the same languages.

Now we recall the nondeterministic finite automata with translucent letters
from %
\cite{BILC}.

A \emph{finite state automaton with translucent letters} ({NFAwtl})
is defined as 
 $A = (Q,\Sigma,\$,\tau,I,F,\delta)$,
where $Q$, %
$\Sigma$, $I$ and $F$ are the same as at an NFA; %
$\$\not\in\Sigma$ is a special symbol that is used technically as an \emph{endmarker},
$\tau:Q\to 2^\Sigma$ is the \emph{translucency mapping},
and $\delta:Q\times\Sigma\to 2^Q$ is the \emph{transition relation}
that satisfies the following condition:
$$\forall q\in Q\; \forall a\in \tau(q): \delta(q,a)=\emptyset.$$
For each state $q\in Q$, the letters from the set $\tau(q)$ are translucent for~$q$,
that is, in state $q$ the automaton $A$ does not see these letters.
$A$ is called \emph{deterministic finite state automaton with translucent letters}, abbreviated as {DFAwtl},
if $|I|=1$ and
if $|\delta(q,a)|\le 1$ for all $q\in Q$ and all $a\in\Sigma$.

An {NFAwtl} $A = (Q,\Sigma,\$,\tau,I,F,\delta)$ works as follows.
Let   $w\in\Sigma^*$ be an input word. $A$
starts in a nondeterministically chosen initial state
$q\in I$ with the word $w\cdot\$$ on its input tape, that is $q_0 w \$ $ is an initial configuration.
A computation step of $A$ is defined as follows. Assume that $w= a_1a_2\cdots a_n$ for some $n\ge 1$ and $a_1,\ldots,a_n\in\Sigma$.
Then $A$ looks for the first occurrence from the left of a letter
that is not translucent (say visible) for the current state $q$,
more precisely, if $w=uav$ such that $u\in(\tau(q))^*$ and $a\not\in \tau(q)$,
then $A$ nondeterministically chooses a state $q'\in\delta(q,a)$,
erases the letter $a$ from the tape thus producing the tape contents $uv\cdot\$$,
and its internal state is set to~$q'$. Therefore after this computation step the configuration is $q' uv \$ $ and the computation continues from this configuration
by looking for the first visible letter of $uv$ at state $q'$.
However, it may happen that $\delta(q,a)=\emptyset$ for the first visible letter $a$,
$A$ halts without accepting, this computation fails.
Finally, if $w\in(\tau(q))^*$ for a configuration $q w\$ $ (including the possibility that the configuration is in fact $q \cdot \lambda \cdot \$ $),
then $A$ reaches the \$-symbol and the computation halts.
In this case $A$ accepts if $q\in F$ is a final state;
otherwise, it does not accept.
A word $w\in\Sigma^*$ is \emph{accepted by} $A$ if there exists an initial state $q_0\in I$
and an accepting computation from $q_0w\cdot \$ $.
Further, the empty word $\lambda$ is accepted by $A$ if there exists an initial state
$q_0\in Q$ such that $q_0$ is also a final state.
Now $L(A) = \{\,w\in\Sigma^*\mid w\mbox{ is accepted by }A\,\}$
is the \emph{language accepted by} $A$. Notice that the endmarker is, in fact, needless; we kept it only for traditional reason.

The classical \emph{nondeterministic finite automata} (NFA)
is obtained from the NFAwtl by removing the endmarker $\$$ and by ignoring the translucency relation~$\tau$,
and the \emph{deterministic finite-state acceptor} (DFA) is obtained from the DFAwtl in the same way.
Thus, the NFA (DFA) can be interpreted as a special type of NFAwtl (DFAwtl).
Accordingly, all regular languages are accepted by DFAwtl.
Moreover, {DFAwtl}s are much more expressive than standard DFAs
as shown by the following example.

\begin{example}\label{exa:nmkj}
Consider the DFAwtl
 $A = (Q,\Sigma,\$,\tau,I,F,\delta)$, where $Q = \{q_0,q,q_a,q_b,q_c,q_d\}$,
$I=\{q_0\}$, $F=\{q\}$, $\Sigma = \{a,b,c,d\}$,
and the functions $\tau$ and $\delta$ are defined as follows:
$$\begin{array}{rclrcl}
\tau(q_0) & = & \{b,c,d\},       & \delta(q_0,a) & = & \{q_a\}, \\
\tau(q) & = & \emptyset, \hspace{1cm}
    & \delta(q,a) & = & \{q_a\}, \ \delta(q,b) = \{q_b\}, \ \delta(q,c) = \{q_c\}, \ \delta(q,d) = \{q_d\}, \\
\tau(q_a) & = & \{a,c,d\},        & \delta(q_a,b) & = & \{q\},\\
\tau(q_b) & = & \{b,c,d\},        & \delta(q_b,a) & = & \{q\},\\
\tau(q_c) & = & \{a,b,c\},        & \delta(q_c,d) & = & \{q\},\\
\tau(q_d) & = & \{a,b,d\},        & \delta(q_d,c) & = & \{q\}.
\end{array}$$
Further, $\delta(p,x)=\emptyset$ for all other pairs $(p,x)\in Q\times\Sigma$.
Firstly, the input must have an $a$, which is consumed in the first step of the computation, then a $b$ is consumed.
   One may see that after that the automaton reads the first letter of the remaining input and depending on what it was, in the next step consumes the first occurrence of a letter that is a pair of the previously erased one, where pairs are $a$-s with $b$-s and $c$-s with $d$-s.
    Consequently $A$ accepts the language $L^{ab} = \{w\in\{a,b,c,d\}^*~|~ |w|_a = |w|_b >0 \text{ and } |w|_c = |w|_d \}$. Similarly, by permuting the roles of the letters, e.g., the language
  $L^{ac} = \{w\in\{a,b,c,d\}^*~|~ |w|_a = |w|_c >0 \text{ and } |w|_b = |w|_d \}$ is also accepted by a DFAwtl. However, the union of these two languages can be accepted by an NFAwtl, but cannot with any DFAwtl.
 This latter fact can be shown somewhat analogously to the fact that the context-free language $\{a^nb^n c^md^m\}\cup\{a^nb^mc^md^n\}$ is not deterministic context-free. We skip the formal proof because the lack of space.
\end{example}

As we have already described NFAwtl and DFAwtl are reformulations of cooperative distributed systems of stateless deterministic restarting R automata with window size one. The DFAwtl, in fact, are reinterpretations of stateless globally deterministic CD-R(1)-systems \cite{LATA2011}.

 Recently various concepts about deterministic computations have been emerged, therefore,
 we recall the concept of state-determinism from \cite{NaCo-sd}.

  An automaton is \emph{state-deterministic} if for each of its state $q\in Q$,
if there is a transition from $q$ and it goes to state $p$ (i.e., $p\in\delta(q,a)$),
 then every transition from $q$ goes to $p$, that is,
   if an automaton has state $q$ in its actual configuration, then, if the computation continues, the state of the next configuration is uniquely determined and it is $p$.

We are continuing the paper in this line.

\section{On state-deterministic finite automata with translucent letters}\label{sec:stateD}

As our first result, we investigate the state-deterministic FAwtl (SFAwtl for short) by applying this type of concept of determinism to NFAwtl.

  First, we recall the concept of \emph{stateless strictly deterministic CD-R(1)-systems} \cite{LATA2011}. In these systems there is only one initial state, and there is exactly one successor component for each component. One may think, that in the terminology of finite automata with translucent letters we can interpret it with the conditions
  $| I | = 1$ 
   and
   for each $q\in Q$, $\left| \bigcup_{a\in\Sigma} \delta(q,a)  \right| = 1$ which may lead to a very similar concept as state-determinism.
However, this is not exactly the case,
stateless strictly deterministic CD-R(1)-systems and the state-deterministic FAwtl are in close relation, but in a CD-R(1)-system one may use the computation step ``Accept'' at any component on a given non translucent letter, while in NFAwtl the acceptance condition is defined in a different way. We are showing some explicit difference of these models in this section.

We present an example to show that these restricted automata are still able to accept non trivial languages.

\begin{example}
  Let $A = (Q,\Sigma,\$,\tau,I,F,\delta)$, where $Q = \{q_0,q_1\}$,
$I=\{q_0\}=F$, $\Sigma = \{0,1\}$,
and the functions $\tau$ and $\delta$ are defined as follows:
$$\begin{array}{rclrcl}
\tau(q_0) & = & \emptyset, \hspace{2.75cm}      & \delta(q_0,0) & = & \{q_1\},\\
\tau(q_1) & = & \{0\},        & \delta(q_1,1) & = & \{q_0\}.
\end{array}$$
By observing the structure of this automaton, it is clearly state-deterministic.
Considering the accepted language, it is the Dyck language, where $0$ refers to opening and $1$ to closing brackets.
\end{example}

\begin{example}
Let $A = (Q,\Sigma,\$,\tau,I,F,\delta)$, where $Q = \{q_0,q_1,q_2,q_3\}$,
$I=\{q_0\}=F$, $\Sigma = \{a,b,c,d\}$,
and the functions $\tau$ and $\delta$ are defined as follows:
$$\begin{array}{rclrcl}
\tau(q_0) & = & \emptyset, \hspace{3.5cm}      & \delta(q_0,a) & = & \{q_1\},\\
\tau(q_1) & = & \{a,c,d\},        & \delta(q_1,b) & = & \{q_2\},\\
\tau(q_2) & = & \{a,b,d\},        & \delta(q_2,c) & = & \{q_3\},\\
\tau(q_3) & = & \{a,b,c\},        & \delta(q_3,d) & = & \{q_0\}.
\end{array}$$
Further, $\delta(q,x)=\emptyset$ for all other pairs $(q,x)\in Q\times\Sigma$.
On the one hand, it is easy to check that $A$ is a DFAwtl which is, in fact, also state-deterministic.
On the other hand, the language accepted by $A$ intersected by the regular language $a^*d^*c^*b^*$ is the non context-free language $\{a^nd^nc^nb^n~|~ n\ge 0\}$, and thus
$A$ accepts a language that is not context-free.
\end{example}

From this example, using the fact that any language accepted by NFAwtl has a letter-equivalent sublanguage that is regular \cite{LATAtrace} (but the language $\{a^nd^nc^nb^n~|~ n\ge 0\}$ does not), we can deduce that:

\begin{proposition}
  The language class accepted by state-deterministic FAwtl is not closed under intersection with regular languages.
\end{proposition}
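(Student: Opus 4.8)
The plan is to combine the example immediately preceding the statement with the structural result quoted from \cite{LATAtrace}. Let $M$ denote the language accepted by the state-deterministic FAwtl $A$ of that example, so that $M$ lies in the class under consideration. Since $a^*d^*c^*b^*$ is a regular language and, as shown in the example, $M \cap a^*d^*c^*b^* = \{a^n d^n c^n b^n \mid n \ge 0\}$, it suffices to prove that this latter language --- call it $L$ --- is not accepted by any state-deterministic FAwtl. Indeed, once this is established we have that $M$ is in the class, $a^*d^*c^*b^*$ is regular, yet their intersection $L$ falls outside the class, which is exactly the desired failure of closure.

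To show that $L$ is not accepted by any state-deterministic FAwtl, I would in fact argue the stronger statement that $L$ is not accepted by any NFAwtl; this is enough because every state-deterministic FAwtl is a special NFAwtl. Here I invoke the cited fact that every language accepted by an NFAwtl has a regular sublanguage that is letter-equivalent to it. The crux of the proof is therefore to verify that $L$ admits no such regular letter-equivalent sublanguage.

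The decisive observation is that the Parikh image of $L$ is $\{(n,n,n,n) \mid n \ge 0\}$, writing the coordinates in the order $a,b,c,d$, and that each such vector is realised by a \emph{unique} word of $L$, namely $a^n d^n c^n b^n$. Consequently, if $L' \subseteq L$ were letter-equivalent to $L$, then for every $n$ the word $a^n d^n c^n b^n$ would have to belong to $L'$, since it is the only word of $L$ carrying that Parikh vector; this forces $L' = L$. But $L$ is not regular --- in fact not even context-free, as a routine pumping argument confirms --- so no regular letter-equivalent sublanguage of $L$ can exist. This contradicts the quoted property of NFAwtl languages, and hence $L$ lies outside the NFAwtl class and a fortiori outside the state-deterministic FAwtl class, completing the argument.

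The only delicate point, and the place I would take care, is the uniqueness-of-preimage step: one must confirm that distinct words of $L$ never share a Parikh vector, so that the letter-equivalent sublanguage is pinned down to all of $L$ rather than some smaller regular set. This is immediate here because $L$ is cut out by an exact-counting condition together with a fixed block ordering, and it is precisely this rigidity that lets the intersection escape the class while the larger language $M$ remains comfortably inside it.
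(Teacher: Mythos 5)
Your proof is correct and follows the same route the paper takes: it uses the preceding example's automaton, intersects with $a^*d^*c^*b^*$ to get $\{a^n d^n c^n b^n \mid n \ge 0\}$, and rules that language out of the NFAwtl class (hence out of the SFAwtl class) via the letter-equivalent regular sublanguage property from \cite{LATAtrace}. The only difference is that you spell out the uniqueness-of-Parikh-preimage step that the paper leaves implicit, which is a welcome addition.
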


Based on \cite{NaCo-sd}, we know that state-deterministic FAwtl have the graph structure with no branching, that is, either a line graph (starting from the sole initial state) or a line graph with an additional edge from the last state to a state.

Clearly languages like $a^*$, $b^*$, $a+aaa$, $ab+ba$ are accepted by state-deterministic FAwtl with 1, 1, 4 and 3 states, respectively. For $ab+ba$
translucency can be used in the initial state, e.g., $a$ is translucent and transition with $b$ leads to the next state, from which the computation may continue by reading an $a$ to reach the final state.

Now, %
we present a relatively simple example language that is not accepted by any SFAwtl.

\begin{example}
  The regular language described by $a^* + b^*$ is not accepted by any state-deterministic FAwtl. It is easy to see that, by assuming that an SFAwtl $A$ accepts the given language, after reading an $a$ or a $b$, $A$ must be in the same state, however, the possible computations after erasing an $a$ or erasing a $b$ must not be the same, since this would lead to accept words containing both $a$ and $b$, contradicting to our assumption on the accepted language.
\end{example}

Based on the above example, we may deduce the following closure property:

\begin{proposition}
  The language class accepted by SFAwtl is not closed under union.
\end{proposition}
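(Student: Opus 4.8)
The plan is to exhibit two languages that are each accepted by some SFAwtl but whose union is not, using the just-proved Example about $a^*+b^*$ as the witness for non-closure. The cleanest choice is to take $L_1 = a^*$ and $L_2 = b^*$. By the remarks immediately preceding the Example, each of these is accepted by a state-deterministic FAwtl (indeed $a^*$ is accepted by a one-state machine with $\tau$ empty and a self-loop on $a$, and symmetrically for $b^*$). Their union is $L_1 \cup L_2 = a^* + b^*$, which the preceding Example shows is \emph{not} accepted by any SFAwtl. Since the class of SFAwtl-languages contains both $L_1$ and $L_2$ but not $L_1 \cup L_2$, the class is not closed under union.

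Concretely, first I would recall that $a^*$ and $b^*$ are regular and are each accepted by an SFAwtl; this is exactly the content of the offhand list ``languages like $a^*$, $b^*$, \ldots are accepted by state-deterministic FAwtl'' preceding the Example, so only a one-line justification is needed (a single state $q_0$ that is both initial and final, with an empty translucency set and a self-transition on the single letter, trivially satisfies the state-determinism condition). Second, I would invoke the Example verbatim: it establishes that $a^* + b^*$ is not accepted by any SFAwtl, via the observation that after erasing an $a$ and after erasing a $b$ the machine would be forced into the same next state (state-determinism), yet the continuations from that state would then have to be simultaneously correct for inputs built from $a$'s and for inputs built from $b$'s, which forces acceptance of mixed words and contradicts the language. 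Third, I would conclude by noting $a^* + b^* = a^* \cup b^*$, so closure under union would place $a^*+b^*$ in the class, contradicting the Example.

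There is essentially no obstacle here, since the hard analytic work is already packaged in the preceding Example; the Proposition is a one-step corollary. The only point requiring a modicum of care is to make the logical direction explicit, namely that demonstrating non-acceptance of the \emph{particular} union $a^*\cup b^*$ of two accepted languages suffices to refute closure (a universally quantified statement is refuted by a single counterexample), and to confirm that the two constituent languages genuinely lie in the class rather than merely being regular. I would therefore phrase the proof as: ``Both $a^*$ and $b^*$ are accepted by SFAwtl, while by the previous Example their union $a^*+b^*$ is not; hence the class is not closed under union.''
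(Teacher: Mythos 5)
Your proof is correct and is exactly the argument the paper intends: the preceding remarks supply $a^*$ and $b^*$ as SFAwtl-accepted languages, and the preceding Example shows their union $a^*+b^*$ is not accepted by any SFAwtl, so non-closure follows immediately. No differences from the paper's (implicit) proof worth noting.
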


 We may also summarize some hierarchy type results based on the previously shown examples.
\begin{proposition}
State-deterministic FAwtl are deterministic, i.e., they are also DFAwtl. \\
Further, the language class accepted by SFAwtl includes some non context-free languages, but on the other hand, does not include all regular languages.
\end{proposition}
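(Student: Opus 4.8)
The plan is to dispatch the three assertions one at a time, each reducing either to a direct unfolding of the definitions or to an example already established in this section.

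\emph{That SFAwtl are DFAwtl.} I would read this off the definition of state-determinism. An automaton is state-deterministic exactly when, for every state $q$, all outgoing transitions lead to one and the same state; that is, there is some $p\in Q$ with $\bigcup_{a\in\Sigma}\delta(q,a)\subseteq\{p\}$. This single inclusion already forces $|\delta(q,a)|\le 1$ for every individual pair $(q,a)\in Q\times\Sigma$, which is precisely the per-letter transition bound in the definition of a DFAwtl. Combined with the single-initial-state convention of the SFAwtl model (every computation starts from the sole initial state, so $|I|=1$), the two defining conditions of a DFAwtl, namely $|I|=1$ and $|\delta(q,a)|\le 1$, are both met. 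Hence every SFAwtl is a DFAwtl. The one point meriting care here is to confirm that state-determinism is genuinely a strengthening of determinism: the clause ``\emph{every} transition from $q$ goes to $p$'' subsumes the per-letter bound but is strictly stronger, so the containment need not be an equality, and one must verify that $|I|=1$ is indeed part of the SFAwtl model for both DFAwtl conditions to hold.

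\emph{That the class contains some non-context-free languages.} This is witnessed directly by the four-state example preceding this proposition, whose automaton was already checked to be a state-deterministic DFAwtl and whose accepted language, intersected with the regular language $a^*d^*c^*b^*$, equals $\{a^nd^nc^nb^n\mid n\ge 0\}$. Since the context-free languages are closed under intersection with regular languages while $\{a^nd^nc^nb^n\mid n\ge 0\}$ is not context-free, the language accepted by that automaton cannot be context-free either. This exhibits a non-context-free language inside the SFAwtl class.

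\emph{That the class does not contain all regular languages.} Here I would invoke the earlier example establishing that $a^*+b^*$ is accepted by no SFAwtl: any hypothetical SFAwtl for this language, after consuming a first $a$ or a first $b$, would be driven into the same successor state (under state-determinism the next state is independent of the letter erased), after which no subsequent behaviour could separate the two cases, so some word containing both $a$ and $b$ would be accepted, a contradiction. Thus $a^*+b^*$ is a regular language outside the class. I expect no serious obstacle overall, since all three parts rest on definitional unfolding together with the two worked examples; the most delicate step is the first, where the single-initial-state assumption must be read into the SFAwtl model so that both DFAwtl conditions are secured.
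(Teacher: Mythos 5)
Your proposal is correct and follows exactly the route the paper intends: the paper states this proposition without a written proof, relying on the definition of state-determinism (which forces $|\delta(q,a)|\le 1$, and together with the sole initial state yields a DFAwtl), on the example accepting a language whose intersection with $a^*d^*c^*b^*$ is $\{a^nd^nc^nb^n\mid n\ge 0\}$, and on the example showing $a^*+b^*$ is accepted by no SFAwtl. Your only addition is to make explicit the point that $|I|=1$ must be read into the SFAwtl model, which is a fair and harmless clarification.
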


We recall from \cite{IJCM} that the language class accepted by stateless strictly deterministic CD-R(1)-systems is closed under complement.
 We show that this is not the case with the state-deterministic FAwtl, thus, in this way,
 we also show that the new concept is not a reinterpretation of these CD systems.

\begin{proposition}
  The language class accepted by SFAwtl is not closed under complement.
\end{proposition}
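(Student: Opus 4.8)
The plan is to exhibit a single witness language $L$ that is accepted by an SFAwtl but whose complement is exactly the language $a^* + b^*$, which the preceding example has already shown to lie outside the class accepted by SFAwtl. Since complementation is taken over $\{a,b\}^*$, I would choose
$L = \{\, w \in \{a,b\}^* \mid |w|_a \ge 1 \text{ and } |w|_b \ge 1 \,\}$,
because then $\overline{L} = \{a,b\}^* \setminus L = a^* \cup b^* = a^* + b^*$. Thus the whole argument reduces to constructing an SFAwtl for $L$, after which the previous example supplies the rest.

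First I would build such an automaton with three states arranged as a line graph $q_0 \to q_1 \to q_2$, which is consistent with the structural characterization recalled above. Set $I = \{q_0\}$ and $F = \{q_2\}$, and define $\tau(q_0) = \{b\}$ with $\delta(q_0,a) = \{q_1\}$; $\tau(q_1) = \{a\}$ with $\delta(q_1,b) = \{q_2\}$; and $\tau(q_2) = \{a,b\}$ with no outgoing transitions. The idea is that in $q_0$ the letter $b$ is translucent, so the automaton scans past the $b$'s and erases the first $a$, failing by reaching $\$$ in the non-final state $q_0$ if there is no $a$ at all; in $q_1$ the letter $a$ is translucent, so it scans past the remaining $a$'s and erases the first $b$, failing in the non-final $q_1$ if there is no $b$; once both witnessing occurrences have been consumed it sits in the final sink $q_2$, where everything is translucent, so the remaining suffix is skipped up to the endmarker and the word is accepted.

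State-determinism is immediate, since from each of $q_0$ and $q_1$ there is exactly one transition target ($q_1$ and $q_2$ respectively) and $q_2$ has no transition at all, so the successor state never depends on the erased letter. It then remains only to verify that this device accepts exactly $L$; this routine check splits into the cases $w \in a^*$, $w \in b^*$, and $w$ containing both letters. The single point that needs a little care -- and the closest thing to an obstacle -- is confirming that no required occurrence is lost before it is tested: because in $q_0$ only one $a$ is erased while the $b$'s are merely skipped, and in $q_1$ only $a$'s are skipped, every original $b$ is still present on the tape when the automaton enters $q_1$, so the test for a second distinct letter is faithful. With $L$ accepted by an SFAwtl and $\overline{L} = a^* + b^*$ not accepted by any SFAwtl, non-closure under complement follows.
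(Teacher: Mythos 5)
Your proof is correct, but it takes a different route from the paper. The paper keeps the Dyck language over $\{0,1\}$ (already shown to be accepted by an SFAwtl) as the witness and then does the hard work of proving, by a fresh case analysis on which letters are translucent in the initial state, that the \emph{complement} of the Dyck language is not accepted by any SFAwtl. You instead reverse the division of labour: you choose $L=\{w\in\{a,b\}^*\mid |w|_a\ge 1 \text{ and } |w|_b\ge 1\}$ precisely so that its complement is $a^*+b^*$, whose non-acceptance by SFAwtl was already established in the preceding example, and so the only thing left to prove is the easy (positive) direction, namely an explicit three-state line-graph SFAwtl for $L$. Your construction checks out: the translucency and transition sets are disjoint in every state, the successor state never depends on the erased letter, and the case analysis ($w\in a^*$, $w\in b^*$, $w$ containing both letters) confirms $L(A)=L$. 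What your approach buys is economy --- the impossibility argument is reused rather than redone --- at the cost of depending on the earlier example being fully rigorous (the paper only sketches that argument). The paper's choice of the Dyck language is self-contained for this proposition and also exhibits a non-regular witness, but requires the longer ad hoc case analysis.
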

\begin{proof}
  On the one hand, as we have described, the Dyck language over $\{0,1\}$ is accepted by state-deterministic FAwtl.
  Now, on the other hand, we show that its complement $L^c$ is not. Let as assume towards a contradiction that there is an  SFAwtl $A$ that accepts $L^c$.
  Since $\lambda$ is not in $L^c$, the initial state $q_0$ is not a final state of $A$.
Let us consider the cases by seeing which of the letters could be translucent at $q_0$.
\begin{itemize}
\item Clearly, it cannot happen that both $0$ and $1$ are translucent, since then, no words would be accepted.
\item In case either $0$ or $1$ is translucent, there must be a transition with the other letter from $q_0$, to another state, say state $q_1$. Now, on the one hand, the input $01$ should not be accepted, but the input $10$ should be. However, in this case, both of these inputs lead to the same configuration after the first step of the computation. This leads to a contradiction, since from here either both of them are accepted by $A$, or none of them.
 \item The last possibility is when there are no translucent letters at $q_0$. Since $A$ must accept words starting with $0$, e.g., $0,00,000,001,010$ and also words starting with $1$, e.g., $1,10,11,100$, the transition with both $0$ and $1$ must go an accepting state $q_1$, i.e., $\delta(q_0,a)=\delta(q_0,b)=q_1$.
Considering the possible input words $01$ and $11$, thus we reach the same configuration $q_1 1 \$$, however, the former word should not be accepted, while the latter one is in $L^c$. By this contradiction, the proof has been finished.
\end{itemize}
\end{proof}

\begin{proposition}
  The language class accepted by SFAwtl is not closed under concatenation.
\end{proposition}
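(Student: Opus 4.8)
The plan is to exhibit two languages, each accepted by some SFAwtl, whose concatenation cannot be accepted by any SFAwtl. The most economical strategy is to reuse languages already shown to be SFAwtl-acceptable and to exploit the rigid structural constraint recalled earlier, namely that an SFAwtl has a graph that is either a line or a line with a single back-edge, and that after reading a given letter the next state is forced independently of which letter was erased. A natural first candidate is to take $L_1 = a^*$ and $L_2 = b^*$, both trivially SFAwtl-acceptable, so that $L_1 L_2 = a^* b^*$; however one must check whether $a^* b^*$ is genuinely out of reach, and the more robust choice is to combine languages that force two incompatible "phases" of computation.

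First I would fix the concatenation $L = L_1 \cdot L_2$ I intend to use and argue non-acceptance by contradiction: assume an SFAwtl $A$ accepts $L$. The key is to track the forced state sequence. Because $A$ is state-deterministic, the sequence of states visited during a computation depends only on \emph{how many} letters have been erased so far (and the branching structure of the line-with-back-edge graph), not on \emph{which} letters were erased. I would then pick two input words that agree in length on their erased prefixes but differ in membership in $L$, and show that after the first computation step (or after erasing a suitable prefix) they reach the \emph{same} configuration, so $A$ must either accept both or reject both, contradicting that exactly one of them lies in $L$. This is precisely the technique used in the preceding propositions (the $a^*+b^*$ example and the complement-of-Dyck proof), and I would mirror that argument structure.

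The main obstacle is the translucency mapping: unlike a plain state-deterministic NFA, here the automaton may skip over translucent letters, so the letter actually consumed in a step is not simply the first letter of the remaining input, and two words of equal length need not reach identical configurations after one step. The careful part of the proof is therefore the case analysis on $\tau(q_0)$ (and on $\tau$ at the few states along the line graph), exactly as in the complement proof: for each choice of which letters are translucent at the initial and early states, I would produce a pair of witness words that are driven into a common configuration yet straddle the boundary of $L$. I would organize this as an itemized case split over the translucency pattern at $q_0$, ruling out each case by such a witness pair, and using the line/back-edge structural restriction to bound how the computation can proceed so that the finitely many cases are exhaustive. Once the witnesses are chosen, verifying that they collapse to the same configuration is a short, routine check in each case.
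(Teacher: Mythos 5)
There is a genuine gap: your proposal is a strategy outline rather than a proof. You never actually commit to the pair of languages $L_1, L_2$ whose concatenation is to be the counterexample, and consequently you never exhibit the witness words or carry out the case analysis on the translucency sets. You float $L_1 = a^*$, $L_2 = b^*$ as a ``first candidate,'' immediately flag that one would still have to check whether $a^*b^*$ is out of reach of every SFAwtl, and then defer to an unspecified ``more robust choice'' --- but that check (or that choice) is exactly where all the mathematical content lives. The general technique you describe --- the forced state sequence of a state-deterministic automaton, driving two words that straddle the language boundary into a common configuration, and an exhaustive case split over $\tau(q_0)$ --- is indeed the right one and matches the paper's method, but a description of the technique with placeholders for the languages and the witnesses does not establish the proposition.

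For comparison, the paper takes the finite languages $a+aaa$ and $b$, so that $L_c = \{ab, aaab\}$, and then runs the case analysis concretely: if nothing is translucent at $q_0$, the only transition is on $a$ to some $q_1$, and then either $q_1$ has a transition on $a$ (forcing the false acceptance of $aa$, since state-determinism sends $a$ and $b$ to the same accepting successor) or $a$ is translucent at $q_1$ (forcing acceptance of $abaa$ along with $aaab$); if $a$ is translucent at $q_0$ then $ba$ is accepted together with $ab$; the remaining translucency patterns are dispatched similarly. The finiteness of $L_c$ is what makes the exhaustive case split short and the false-acceptance witnesses easy to name. To repair your proposal you would need to either carry out this kind of explicit analysis for a concrete choice of $L_1, L_2$, or actually prove that $a^*b^*$ (or your ``more robust'' alternative) is rejected by every SFAwtl --- neither of which is present in what you wrote.
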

\begin{proof}
 Let us consider the languages $a+aaa$ and $b$ which both are accepted by state-deterministic FAwtl. Let us consider now their concatenation $L_c = \{ab,aaab\}$.
 Let us assume that there is an SFAwtl $A$ that accepts $L_c$.
 In its initial state $q_0$, there are the following options:
 \begin{itemize}
   \item There is no translucent letters for $q_0$, then there must be a transition with $a$ to a state $q_1 (\ne q_0)$ and no other transition from $q_0$. Neither $q_0$, nor $q_1$ is a final state. Now, at $q_1$ $A$ should be able to read a $b$ and it must reach an accepting state $q_2$. Now there are two subcases:
       \begin{itemize}
         \item If there is no translucency used at $q_1$, it must also have a transition with $a$ (to $q_2$) allowing to process the word $aaab$, however, in this case there would be a false acceptance of $aa$ by $A$ arriving to a contradiction.
         \item In the second subcase, $a$ must be translucent in $q_1$, and thus from the original input $aaab$, the remaining was $aab$ and in this way the last letter $b$ could be read, and then the remaining $aa$ should be accepted. However, in this case $A$ would also accept the word $abaa$ with a similar computation as $aaab$ contradicting to the fact that it accepts $L_c$.
       \end{itemize}
   \item If $a$ is translucent at $q_0$, then we must have a transition with $b$, then from $q_1$ $A$ should able to accept the remaining word $a$. However, in this case a similar computation would accept also $ba$ as the computation for $ab$. Contradiction.
   \item If $b$ is translucent at $q_0$, then we must have a transition with $a$, leading to the state $q_1$, similarly as in the first case. For $q_1$ the proof works in exactly in the same way as in that case.
   \item If both $a$ and $b$ are translucent in $q_0$, no transition can be defined, consequently, the nonempty language $L_c$ cannot be accepted in this way.
 \end{itemize}
\end{proof}

\section{The new models with nondeterministic translucency}\label{sec:def}

In this section, first we provide the formal definition of the new automata models and their work.

A \emph{(deterministic) finite state automaton with nondeterministically translucent letters}, abbreviated as (DFAwntl) {NFAwntl},
is defined as a septuple $A = (Q,\Sigma,\$,\tau,I,F,\delta)$,
 similarly to NFAwtl (DFAwtl), respectively, but without the condition
that
$\forall q\in Q\; \forall a\in \tau(q): \delta(q,a)=\emptyset.$ That is, for an NFAwntl
(DFAwntl) it is allowed that for a state $q\in Q$ and for a letter $a\in \Sigma$ both
$a\in \tau(q)$ and $\delta(q,a) \ne \emptyset$ hold.
Notice that at a DFAwntl, there is only one initial state, and there is at most one transition defined for any input letter, as at DFAwtl.

A computation step of $A$ is defined as follows. Assume that $w= a_1a_2\cdots a_n$ for some $n\ge 1$ and $a_1,\ldots,a_n\in\Sigma$ and $A$ is in state $q$.
Then $A$ looks for an occurrence
 of a letter (say $a_i = b$) for which a transition is defined, i.e., $\delta(q,b) \ne \emptyset$ such that
 each letter $a_j \in \tau(q)$ with $j<i$.
In this way, the actual configuration can be written as $q\cdot u b v \$ $ with letter $b$ in the position of $a_i$, $u\in \tau(q)^*$, $v\in \Sigma^*$, and the next configuration could be $p\cdot uv \$ $ for a state $p \in \delta(q,b)$. \\
 On the other hand, it may happen that such letter $a_i$ does not exist,
  i.e., there is a letter $a_i = c \not\in \tau(q)$ such that for each $j<i$ $a_j\in\tau(q)$ and $\delta(q,a_k)=\emptyset$ (for all $k\le j$)
  including $\delta(q,c)=\emptyset$. 
In this case $A$ halts without accepting; this computation fails. \\
Further, if $w\in(\tau(q))^*$ for a configuration $q w\$ $ with $q \in F$,
then $A$ reaches the \$-symbol and the computation halts by accepting. \\
Finally, $w\in(\tau(q))^*$, $q\not\in F$ and there is no letter in $w$ for which a transition has been defined ($\delta(q,a_i) = \emptyset$ for each $i$, or $w=\lambda$),
then the computation fails: $A$ does not accept.

A word $w\in\Sigma^*$ is \emph{accepted by} $A$ if there exists an initial state $q_0\in I$
and an accepting computation from $q_0w\cdot \$ $.
Now $L(A) = \{\,w\in\Sigma^*\mid w\mbox{ is accepted by }A\,\}$
is the \emph{language accepted by} $A$.

Based on these definitions, we can define four categories of NFAwtl:

\begin{center}
\begin{tabular}{l|cc|}
translucency $\backslash$ transition mapping & deterministic & nondeterministic \\
  \hline
disjoint & DFAwtl & NFAwtl \\
nondeterministic & DFAwntl & NFAwntl \\
  \hline
\end{tabular}
\end{center}

As we will show although the model DFAwntl seems deterministic by its transition function, we may easily cheat by the nondeterminism allowed by translucency.

\begin{example}\label{ex-DFAwntl}
  Let $A = (Q,\Sigma,\$,\tau,I,F,\delta)$, where $Q = \{q_0,q_a,q_b,q_c\}$,
$I=\{q_0\}$, $F=\{q_c\}$, $\Sigma = \{a,b,c\}$,
and the functions $\tau$ and $\delta$ are defined as follows:
$$\begin{array}{rclrcl}
\tau(q_0) & = & \emptyset, \hspace{2.5cm}      & \delta(q_0,a) & = & \{q_a\}, \delta(q_0,b)  =  \{q_b\}, \delta(q_0,c)  =  \{q_c\}, \\
\tau(q_a) & = & \{a,b,c\},        & \delta(q_a,b) & = & \{q_0\},\\
\tau(q_b) & = & \{a,b,c\},        & \delta(q_b,a) & = & \{q_0\},\\
\tau(q_c) & = & \emptyset.        & 
\end{array}$$
Further, $\delta(q,x)=\emptyset$ for all other pairs $(q,x)\in Q\times\Sigma$.
It is easy to check that $A$ is a DFAwntl. \\
Let us see how $A$ works. Since there are no translucent letters in $q_0$, $A$ consumes the first letter of the remaining input always in this state. If it was an $a$, then it erases a $b$ from anywhere in the tape; if $A$ consumes a $b$ at state $q_0$, then in the next computation step $A$ erases an $a$ from anywhere in the tape. Finally, the input is accepted if only a $c$ remains on the tape  and $A$ is in state $q_0$, then it reaches the accepting state $q_c$.

Thus, for every accepted word the number of its $a$-s and $b$-s are the same and it contains a $c$. Let us write such a word in the form $v c u$ with $v,u \in \{a,b\}^*$.
It is also easy to see that $A$ may accept various words where $|v| \ge |u|$, but no words with $|v| < |u|$.
On the other hand, let us see which words are accepted with the property $|v| = |u|$.
By the work of $A$, the conditions $|v|_a = |u|_b$ and $|v|_b = |u|_a$ must hold, i.e., in $v$ the number of $a$-s is the same as the number of $b$-s in $u$ and vice versa.

Now we are arguing that the same language cannot be accepted by any NFAwtl without using nondeterministic translucency (due to lack of space we skip some parts of the formal proof).
Let us assume that there is an NFAwtl $B$ that accepts the same language as $A$. Let the number of states of $B$ is $n$.
Let us consider a word $w = a^kb^\ell c a^\ell b^k \in L(A)$ with $k,\ell >2n$. By our assumption $B$ accepts $w$, thus consider an accepting computation on $w$ by $B$. Clearly, there are two cases based on the first $n+1$ steps of the computation.
\begin{itemize}
\item  If letter $c$ is erased during these computation step, then it can be shown that some words $v c u$ with $|v| < |u|$ would also be accepted by $B$ having all the letters processed after the step in which $c$ is read after the $c$ in the original input word.
\item If $c$ is not read during the first $n+1$ steps, only $a$-s and $b$-s before the $c$ (in part $v$) are accessed and processed in the first $n+1$ steps. However, by the pigeon-hole principle, a state is repeated during these steps, meaning that
there are also values $i$ and $j$ ($0\leq i,j\leq n+1$, $i+j>0$) such that
 from input $w' = a^{k-i}b^{\ell-j} c a^\ell b^k$ in $n+1-i-j$ steps exactly the same configuration is reached as from $w$ in $n+1$ steps.
 In this case, by continuing the computation on $w'$ in the same way as the accepting computation on $w$, the word $w'$ will also be accepted.
\end{itemize}
Now, in both cases, we have reached contradiction by accepting words of the form $v c u$ for which $|v| < |u|$.
\end{example}

\subsection{Hierarchy of the accepted languages}

In this subsection our aim is  to
 give some
hierarchy like results by establishing where the new families of languages are comparing them to various other classes.

First, we note that, in fact, the following inclusions hold by definition.
\begin{proposition}
Every NFAwtl is an NFAwntl and every DFAwtl is a DFAwntl. \\ Moreover, every DFAwntl is an NFAwntl.
\end{proposition}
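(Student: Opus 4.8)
The plan is to establish each of the three claimed inclusions directly from the definitions, since this Proposition is purely a matter of unfolding the syntactic and semantic specifications rather than constructing new machines. The key observation is that the NFAwntl (DFAwntl) definition is obtained from the NFAwtl (DFAwtl) definition by \emph{removing} the disjointness condition $\forall q\in Q\; \forall a\in \tau(q): \delta(q,a)=\emptyset$. Thus every NFAwtl already satisfies the (weaker) NFAwntl requirements, and the whole content of the argument is to check that the \emph{computation semantics} agree on such machines, so that the accepted language is literally unchanged.

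First I would argue the inclusion \emph{every NFAwtl is an NFAwntl}. Given an NFAwtl $A = (Q,\Sigma,\$,\tau,I,F,\delta)$, I claim $A$ is syntactically a legal NFAwntl: the NFAwntl definition imposes no condition relating $\tau$ and $\delta$, so the septuple trivially qualifies. The step that needs care is semantic consistency. In an NFAwtl the transition is applied to the first visible (non-translucent) letter, whereas the NFAwntl computation step searches for the first letter $b$ with $\delta(q,b)\neq\emptyset$ whose predecessors all lie in $\tau(q)$. For an NFAwtl the disjointness condition guarantees that any letter $b$ with $\delta(q,b)\neq\emptyset$ automatically satisfies $b\notin\tau(q)$; hence ``first visible letter on which a transition is defined'' and ``first letter with a transition defined, preceded only by translucent letters'' pick out exactly the same occurrence. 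I would verify that the halting and acceptance conditions coincide as well: both models accept when the remaining word lies in $\tau(q)^*$ with $q\in F$, and both fail when the first relevant letter has an empty transition set. Therefore the set of computations, and so $L(A)$, is identical under the two interpretations.

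Next, the inclusion \emph{every DFAwtl is a DFAwntl} follows the same argument, with the additional remark that the determinism conditions $|I|=1$ and $|\delta(q,a)|\le 1$ are preserved verbatim, since the DFAwntl definition retains precisely these constraints (as noted explicitly in the excerpt). Finally, \emph{every DFAwntl is an NFAwntl} is immediate: a DFAwntl is by definition an NFAwntl satisfying the extra restrictions $|I|=1$ and $|\delta(q,a)|\le 1$, so dropping those restrictions leaves a legitimate NFAwntl with the same transition relation, translucency mapping, and hence the same accepted language.

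The main obstacle, such as it is, is not difficulty but rigor: one must confirm that the subtle difference in how a computation step locates the letter to be erased does not alter behaviour on the restricted (disjoint-translucency) machines. I would present this as the one genuine verification, emphasizing that it is exactly the disjointness condition of NFAwtl that forces the two selection rules to agree; everything else is a matter of noting that removing a constraint can only enlarge a class of machines and never changes the semantics of a machine that already met the constraint.
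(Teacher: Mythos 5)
Your proposal is correct and takes essentially the same route as the paper, which simply asserts that these inclusions ``hold by definition'' and gives no further argument. Your additional verification that the computation-step semantics of an NFAwntl restricted to a disjoint-translucency machine coincide with the NFAwtl semantics is a worthwhile piece of rigor the paper leaves implicit, but it does not constitute a different approach.
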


Based on Example~\ref{ex-DFAwntl}, %
we can also state some hierarchy results on the accepted language classes.
\begin{proposition}
The language class accepted by  NFAwtl is a proper subclass of the language class accepted by NFAwntl. \\
The language class accepted by  DFAwtl is a proper subclass of the language class accepted by DFAwntl. \\
\end{proposition}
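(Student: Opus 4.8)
The plan is to establish the two proper-subclass inclusions separately, since each concerns a different mode of determinism. For the nondeterministic case, I would first invoke the previous proposition: every NFAwtl is literally an NFAwntl (the defining constraint of NFAwtl, namely $\forall q\,\forall a\in\tau(q)\colon\delta(q,a)=\emptyset$, is simply dropped, so an NFAwtl trivially satisfies the weaker requirements of an NFAwntl and computes the same language). This gives the inclusion of language classes at once. To make it \emph{proper}, I would point to Example~\ref{ex-DFAwntl}, where a DFAwntl $A$ (hence also an NFAwntl, by the preceding proposition) accepts a language that, as argued there, cannot be accepted by any NFAwtl. Thus the inclusion is strict.

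For the deterministic case the strategy is identical in shape. The inclusion DFAwtl $\subseteq$ DFAwntl again follows by definition, since a DFAwtl is a DFAwntl whose translucency and transition sets happen to be disjoint. For properness I would again use $A$ from Example~\ref{ex-DFAwntl}: that automaton is explicitly verified there to be a DFAwntl (it has a single initial state and $|\delta(q,a)|\le 1$ everywhere), and the separating argument in the example shows its language is not accepted by any NFAwtl. Since every DFAwtl \emph{is} an NFAwtl, a fortiori $L(A)$ is not accepted by any DFAwtl, so the DFAwtl class is a proper subclass of the DFAwntl class.

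The one point deserving care is the logical direction of the separation. Example~\ref{ex-DFAwntl} rules out \emph{NFAwtl} acceptance, which is the stronger (nondeterministic) model; I would explicitly note that this a fortiori rules out DFAwtl acceptance for the deterministic statement, so a single witness language discharges both properness claims. I would therefore present the proof as: both inclusions hold by the definitional embedding recorded in the previous proposition, and both are strict because the language of $A$ in Example~\ref{ex-DFAwntl} lies outside the NFAwtl class (and hence outside the DFAwtl class as well).

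The hard part is not really in this proof at all — it has been quarantined into Example~\ref{ex-DFAwntl}, whose pumping-and-pigeonhole argument does the genuine work of proving $L(A)$ is unattainable without nondeterministic translucency. Given that example, the proposition reduces to bookkeeping about which model embeds in which. If I were to scrutinize anything, it would be whether the example's separation argument is stated for NFAwtl (the stronger model) rather than merely DFAwtl, because only the NFAwtl-level separation is strong enough to yield \emph{both} properness statements from one witness; the excerpt's phrasing ``cannot be accepted by any NFAwtl without using nondeterministic translucency'' confirms this, so no additional witness is needed.
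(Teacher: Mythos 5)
Your proposal is correct and matches the paper's own reasoning: the paper states the inclusions follow from the definitional embedding (the preceding proposition) and derives properness directly from Example~\ref{ex-DFAwntl}, whose argument rules out acceptance by any NFAwtl and hence a fortiori by any DFAwtl. Your explicit remark that the single witness discharges both properness claims is exactly the intended (though unstated) bookkeeping.
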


Here, we leave open the question if NFAwntl is more efficient and expressive than DFAwntl.

 On the one hand, we have seen that we can construct DFAwntl that accept some non context-free languages. Now, on the other hand, let us show some of their limitations.

To compare the new language classes with some classical classes of formal languages we establish the following result.

\begin{lemma}
  For every language accepted by an NFAwntl (DFAwntl), there is a letter equivalent sublanguage that is accepted by an NFAwtl (DFAwtl, resp.).
\end{lemma}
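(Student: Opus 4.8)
The plan is to reduce the overlap between translucency and transitions by a direct automata construction, and then verify two things: that the constructed machine accepts only words of the original language (the sublanguage condition), and that it still reaches every Parikh vector of the original language (letter-equivalence). Given an NFAwntl $A=(Q,\Sigma,\$,\tau,I,F,\delta)$, I would first build an NFAwtl $A'$ over the same alphabet by deleting from every translucency set exactly those letters that also carry a transition, i.e. setting $\tau'(q)=\tau(q)\setminus\{a:\delta(q,a)\neq\emptyset\}$ and keeping $\delta'=\delta$; then $\tau'$ and $\delta'$ are disjoint, so $A'$ is a legal NFAwtl. The only acceptances this naive $A'$ can lose are the accepting halts of $A$ whose leftover (all translucent in $A$) still contains letters that are \emph{both} translucent and transition-carrying at the final state. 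To repair this I would adjoin, for each $q\in F$, a final ``clean-up'' twin $q^\dagger$ that makes every purely-translucent letter of $q$ translucent and carries a self-loop on each such doubly-used letter, entering $q^\dagger$ from $q$ on any such letter.

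For letter-equivalence I would reorder each input canonically. Fix $w\in L(A)$ and an accepting computation of $A$; let $b_1,\dots,b_m$ be the erased letters in the order erased and let $r$ be the multiset left untouched at the accepting halt, so that $b_1\cdots b_m\,r$ is a permutation of $w$. Reading $u=b_1\cdots b_m\,r$, the machine $A'$ can erase $b_1,\dots,b_m$ strictly from the front: each $b_i$ sits at the front, realizes the transition $q_{i-1}\to q_i$, and is non-translucent in $A'$ precisely because it carries a transition, so the front-erasing run mirrors the original computation and reaches $q_m\in F$ with $r$ remaining. If $r$ is purely translucent, $A'$ accepts at $q_m$; otherwise it steps into the twin $q_m^\dagger$ on the first doubly-used leftover letter, consumes the remaining such letters by self-loops, and accepts once only purely-translucent letters remain. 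Hence some permutation of every $w\in L(A)$ lies in $L(A')$.

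For the sublanguage inclusion I would project any accepting run of $A'$ back onto $A$. The portion of the run staying inside $Q$ uses only $\delta'=\delta$ and consists of legal $A$-steps (each erased letter carries a transition in $A$, and each skipped prefix is translucent in $A$ since $\tau'\subseteq\tau$), so it drives $A$ to some $q\in F$; from $q$ onward $A'$ only erases doubly-used letters and skips purely-translucent ones, all translucent for $A$ at $q$, so at the instant $A'$ reached $q$ the entire remaining input was already translucent for $A$, and $A$ accepts there. Thus $L(A')\subseteq L(A)$, which also supplies the reverse direction of letter-equivalence for free.

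The main obstacle is preserving \emph{determinism} in the DFAwntl$\to$DFAwtl half. The twin edge $\delta'(q,a)=\delta(q,a)\cup\{q^\dagger\}$ is genuinely nondeterministic, and it cannot simply be dropped: a doubly-used letter at a final state plays two indispensable roles — it may be skipped and left as leftover (so acceptance happens at $q$), or consumed so that the computation continues past it — and a single deterministic transition cannot branch on which role is needed (the small machine accepting $a^{*}\cup ab\,a^{*}$ already forces both). I expect the fix to require splitting each state into a ``processing'' copy and a committed ``clean-up'' copy and routing into the latter in a lookahead-free way, so that the dual role is resolved by the current state rather than by a nondeterministic choice; making this routing deterministic while still accepting the canonical front-loaded permutations is the delicate heart of the argument, and is where I would concentrate the effort.
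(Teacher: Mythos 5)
Your nondeterministic half is correct and is in fact more careful than the paper's own argument. The paper's proof is exactly your naive construction ($\tau'(q)=\tau(q)\setminus\{a:\delta(q,a)\neq\emptyset\}$, $\delta'=\delta$) together with the reordering of each accepted word into erased-letters-followed-by-leftover; it does not address the failure mode you isolate, namely that the leftover at the accepting halt may contain letters that are doubly used at the final state, which the stripped automaton is then forced to read. That failure is real: take $Q=\{q_0,q_1\}$, $I=F=\{q_0\}$, $\Sigma=\{a\}$, $\tau(q_0)=\{a\}$, $\delta(q_0,a)=\{q_1\}$ with $q_1$ a dead non-final state; then $L(A)=a^*$, but the naive $A'$ accepts only $\lambda$, which is not letter equivalent to $a^*$. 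Your clean-up twins $q^\dagger$ repair this for NFAwntl, and your verification of both directions (the sublanguage inclusion by projecting runs of $A'$ back onto $A$, and letter equivalence via the front-loaded permutation) is sound.

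The genuine gap is the deterministic half, which you explicitly leave unresolved: the edge set $\delta'(q,a)\supseteq\delta(q,a)\cup\{q^\dagger\}$ is nondeterministic, and you only conjecture that splitting states into processing and clean-up copies can make the routing deterministic, without exhibiting such a construction or proving it preserves both the sublanguage property and letter equivalence. Since the lemma asserts the DFAwtl conclusion for DFAwntl as well, the proposal as written establishes only half of the statement. Note that the difficulty cannot be dismissed as local bookkeeping: dropping the twin edge loses letter equivalence (as the $a^*$ example shows), while redirecting $\delta'(q,a)$ entirely to $q^\dagger$ destroys the computations that continue past $q$, so some nontrivial idea is still required here. (The paper's own proof offers nothing beyond the naive construction for this case either, so you have correctly located where the argument is thinnest, but locating the obstacle is not the same as overcoming it.)
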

\begin{proof}
In case there is no such letter in any state which is both in the set of translucent letters and there is also a transition on it, the automaton is in fact, an NFAwtl (also a DFAwtl in deterministic case) and its language, as its own sublanguage, fulfils the statement of the lemma.

Now, let us assume that automaton $A$ is an NFAwntl, but it is not an NFAwtl.
 On the one hand,  since there is no ``forced'' way not to see a letter for which a transition is defined, $A$ may always consumes the first occurrence of such a letter and in fact accepts words of a language that is also accepted by an NFAwtl.
More precisely, by removing those letters from the translucency set of a state for which transitions are defined, an NFAwtl (DFAwtl) $A'$ can be obtained.
Clearly all words that $A'$ may accept are also accepted by $A$ with a similar computation.

 To see that this language is letter equivalent to the originally accepted language, consider a computation on any of the accepted word by $A$.
Since the automaton never knows if the consumed letter in a computation step is the first or only reached through some translucent letters, we may reorder the input according to an accepting computation, and in this way for each accepted word there will be a letter equivalent word that has also been accepted by $A'$.
\end{proof}

Moreover, if, by chance, the letters of the input are ordered in exactly the way as they are consumed during an accepting computation, then, in fact, an NFAwntl is working in the same way as an NFA, thus we may also establish the following fact:
\begin{lemma}
  For every language accepted by an NFAwntl, there is a letter equivalent regular sublanguage.
\end{lemma}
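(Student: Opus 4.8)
The plan is to combine the previous lemma with the observation that Parikh images are preserved under reordering, then invoke Parikh's theorem. The final statement asserts that every NFAwntl-accepted language has a letter-equivalent \emph{regular} sublanguage. The immediately preceding lemma already gives a letter-equivalent sublanguage accepted by an NFAwtl; so one route would be to further reduce from NFAwtl to a regular language via the cited fact (from \cite{LATAtrace}) that every NFAwtl language has a letter-equivalent regular sublanguage. However, the remark just before the statement suggests a more direct argument, and I would follow that instead, since it yields the result in one step and makes the role of nondeterministic translucency transparent.

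First I would fix an NFAwntl $A = (Q,\Sigma,\$,\tau,I,F,\delta)$ and define an ordinary NFA $A'$ on the same state set, initial states, and final states, whose transition relation is $\delta'(q,a) = \delta(q,a)$ for every $q$ and $a$, simply ignoring $\tau$ entirely (and dropping the endmarker). The key claim is that $L(A')$ is the desired regular sublanguage: it is regular because $A'$ is a genuine NFA, and I must verify both that $L(A') \subseteq L(A)$ (so it is a \emph{sub}language in the letter-equivalence sense) and that $L(A')$ is letter equivalent to $L(A)$.

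Next I would establish the two directions. For inclusion in spirit, given any $w \in L(A)$ with accepting computation in $A$, I read off the sequence $b_1 b_2 \cdots b_m$ of letters consumed (with the transitions taken) during that computation; the word $w' = b_1 \cdots b_m$ is a permutation of $w$, and crucially, running $A'$ on $w'$ left-to-right reproduces exactly the same state sequence, because at each step $A'$ reads the next letter unconditionally and $A$ was free to ignore all intervening translucent letters. Thus $w' \in L(A')$, and $w'$ is letter equivalent to $w$; this shows every word of $L(A)$ has a letter-equivalent partner in $L(A')$. Conversely, for any $w' = b_1 \cdots b_m \in L(A')$ with its accepting run, the same state sequence is realizable in $A$ by always choosing to consume the currently-scanned letter (which is always legal in an NFAwntl, since a defined transition is never ``forced off'' by translucency), so $w' \in L(A)$, giving the reverse letter-equivalence direction as well. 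Together these yield that $L(A')$ is letter equivalent to $L(A)$ and contained in it.

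The main obstacle, and the point deserving care, is the precise semantic correspondence between an accepting computation of $A$ and a left-to-right run of $A'$ on the reordered word. In $A$, a computation step may skip over a block $u \in \tau(q)^*$ before consuming a letter $b$, and the skipped letters remain on the tape; so the bookkeeping of ``which letters have been consumed so far'' must be handled carefully to confirm that the induced consumption order $b_1 \cdots b_m$ is indeed a permutation of the full input and that feeding it to $A'$ triggers the identical transitions. I expect the subtlety to lie in checking that the final halting-and-accepting condition matches: $A$ accepts when the remaining tape lies in $\tau(q)^*$ with $q \in F$, whereas $A'$ accepts on an empty input in a final state; after the reordering every consumed letter has been processed, so the remaining suffix in $A$ is empty and the acceptance conditions coincide. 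Once this correspondence is stated cleanly, the regularity of $L(A')$ and the letter-equivalence both follow immediately, and the lemma is proved.
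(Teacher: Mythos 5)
Your construction has a genuine gap in the forward (letter-equivalence) direction. You claim that the sequence $b_1\cdots b_m$ of letters consumed during an accepting computation of $A$ on $w$ is a permutation of $w$. It is not, in general: $A$ accepts as soon as the remaining tape content lies in $(\tau(q))^*$ for a final state $q$, so an accepting computation may halt with unconsumed letters still on the tape, and $b_1\cdots b_m$ is then a permutation of only a proper sub-multiset of $w$. Your attempt to dismiss this at the end (``after the reordering every consumed letter has been processed, so the remaining suffix in $A$ is empty'') is circular --- the reordered word omits exactly the leftover letters, so it is not letter equivalent to $w$. A concrete counterexample: take $Q=\{q_0\}$, $I=F=\{q_0\}$, $\tau(q_0)=\{a\}$, $\delta(q_0,a)=\emptyset$. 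Then $L(A)=a^*$, but your $A'$ (drop $\tau$, keep $\delta$) accepts only $\{\lambda\}$, which is not letter equivalent to $a^*$. The repair is easy and stays within your approach: instead of $L(A')$, take the regular language $\bigcup_{q\in F} P_q\cdot(\tau(q))^*$, where $P_q$ is the (regular) set of words labelling transition paths of $A$ from an initial state to $q$. A word $b_1\cdots b_m u$ with $u\in(\tau(q))^*$ is accepted by $A$ by always consuming the leftmost letter for the first $m$ steps (legal in an NFAwntl, as you note) and then halting in $q\in F$ with $u$ on the tape; conversely, every accepting computation of $A$ on $w$ yields such a word that is a genuine permutation of $w$.

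For what it is worth, the paper itself gives no formal proof of this lemma --- it only remarks that when the input happens to be ordered as the letters are consumed, an NFAwntl behaves like an NFA, and it leans on the proof of the preceding lemma (reordering the input according to an accepting computation). Your proposal is the natural formalization of that same remark, so the route is the intended one; the missing ingredient in your write-up is precisely the treatment of the translucent suffix left over at acceptance, which the $(\tau(q))^*$ tails supply.
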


From the previous lemma we can conclude:
\begin{corollary}
  All languages accepted by NFAwntl are semi-linear.
\end{corollary}

This could be interesting in the mirror of the fact, that by changing the window size of an R automata from 1 to 2 (like allowing to have the translucency and transitions not  letter by letter, but somehow by pairs of consecutive letters), the corresponding model, the CD system of stateless deterministic R(2) automata is able to accept some non semi-linear languages \cite{jalcR(2)}.

On the other hand, as the linear context-free language $\{a^nb^n\}$ does not have any letter equivalent regular sublanguage, there is no NFAwntl that could accept it.
 Since $\{a^nb^n\}$ is deterministic linear and also in 2detLIN (accepted by deterministic 2-head finite automata consuming the input letters from the two extremes until they are meeting \cite{ActInf-Sh}), we have the following incomparability results.

\begin{theorem}
  The language classes accepted by NFAwntl and DFAwntl properly include the class of regular languages. Further,
  the language classes accepted by NFAwntl and DFAwntl are incomparable to each of the following classes of languages: deterministic linear, 2detLIN, linear context-free, deterministic context-free, context-free.
\end{theorem}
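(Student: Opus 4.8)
The plan is to derive all the stated non-inclusions from just two witness languages: one that lies in every listed class yet is accepted by no NFAwntl, and one that is accepted by a DFAwntl yet lies outside even the largest of the five classes, the context-free languages. Since $\mathrm{DFAwntl}\subseteq\mathrm{NFAwntl}$ by the earlier proposition, and since each of deterministic linear, $2detLIN$, linear context-free, and deterministic context-free is contained in the context-free class, these two witnesses settle every incomparability for both models at once. I would first dispose of the regular inclusions: every DFA is a DFAwtl and every DFAwtl a DFAwntl (and likewise for the nondeterministic chain), so all regular languages are accepted by DFAwntl and by NFAwntl, and properness is immediate from the non-regular --- indeed non-context-free --- DFAwntl language already exhibited.

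For the first direction of each incomparability I would take $L=\{a^nb^n\mid n\ge 0\}$. This language is deterministic linear, lies in $2detLIN$, and is linear context-free as well as deterministic context-free, so it belongs to all five classes. On the other hand, any sublanguage of $L$ has the form $\{a^nb^n\mid n\in S\}$ for some $S\subseteq\mathbb{N}$, and because the Parikh vector $(n,n)$ is realized inside $L$ only by the single word $a^nb^n$, the only sublanguage letter-equivalent to $L$ is $L$ itself, which is not regular. By the Lemma guaranteeing that every NFAwntl language has a letter-equivalent regular sublanguage, this forces $L$ to be accepted by no NFAwntl, and hence by no DFAwntl; so none of the five classes is contained in the NFAwntl- or DFAwntl-class. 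For the converse direction I would invoke the non-context-free language accepted by a DFAwntl already at our disposal --- for instance the state-deterministic automaton whose intersection with $a^*d^*c^*b^*$ equals $\{a^nd^nc^nb^n\mid n\ge 0\}$, which is a DFAwtl and therefore a DFAwntl. Being outside the context-free languages, it lies outside all five classes, so neither the NFAwntl- nor the DFAwntl-class is contained in any of them. Combining the two directions yields incomparability with each class for both models.

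This argument is essentially a bookkeeping assembly of results already in hand, so I do not expect any single hard step. The only points requiring care are the claim that $\{a^nb^n\}$ admits no letter-equivalent regular sublanguage, which I isolate above via its Parikh image, together with the routine checks that $\{a^nb^n\}$ genuinely lies in each of the five named classes and that the chosen accepted language is genuinely non-context-free --- the latter being immediate from closure of the context-free languages under intersection with regular sets.
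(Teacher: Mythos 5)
Your proposal is correct and follows essentially the same route as the paper, which derives the theorem from the same two witnesses: $\{a^nb^n\}$ (in all five classes but with no letter-equivalent regular sublanguage, hence rejected by every NFAwntl via the preceding lemma) and the non-context-free language accepted by the state-deterministic DFAwtl of the earlier example. Your added care in verifying that $\{a^nb^n\}$ is letter-equivalent only to itself is a worthwhile explicit detail the paper leaves implicit.
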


Finally, we analyse the computations of DFAwntl. 

\subsection{Simulating nondeterministic computations with DFAwntl}

In this section our aim is to %
show that although seemingly by the transition function, DFAwntl seem to be deterministic
automata, they have some real nondeterministic features.

\begin{example}\label{exa6}
  As we mentioned in Example~\ref{exa:nmkj}, there are DFAwtl that accept the languages
  $$L^{ab} = \{w\in\{a,b,c,d\}^*~|~ |w|_a = |w|_b >0 \text{ and } |w|_c = |w|_d  \}, $$
  $$L^{ac} = \{w\in\{a,b,c,d\}^*~|~ |w|_a = |w|_c > 0 \text{ and } |w|_b = |w|_d \}$$ and
  $$L^{ad} = \{w\in\{a,b,c,d\}^*~|~ |w|_a = |w|_d >0 \text{ and } |w|_b = |w|_c \}.$$
  Let these automata be having the set of states $Q^{ab} = \{q_0^{ab},q_a^{ab},q_b^{ab},q_c^{ab},q_d^{ab}, q^{ab}\}$;   $Q^{ac} =\{q_0^{ac},q_a^{ac},q_b^{ac},q_c^{ac}, \\ q_d^{ac}, q^{ac}\}$ and
    $Q^{ad} =\{q_0^{ad},q_a^{ad},q_b^{ad},q_c^{ad},q_d^{ad}, q^{ad}\}$, respectively.

  The union of these languages cannot be accepted by any DFAwtl.
  Let us define now a DFAwntl as follows.

 $A = (Q,\Sigma,\$,\tau,I,F,\delta)$, where $Q = Q^{ab} \cup Q^{ac} \cup Q^{ad} \cup \{q_0\}$,
$I=\{q_0\}$, $F=\{q^{ab},q^{ac},q^{ad}\}$, $\Sigma = \{a, b, c,\\ d\}$,
and the functions $\tau$ and $\delta$ are defined as follows:

$$\tau(q_0)  =  \{a,b,c,d \}, \hspace{1.5cm}     \delta(q_0,a) = \emptyset,  \delta(q_0,b)  =  \{q_b^{ab}\},
\delta(q_0,c) = \{q_c^{ac}\}, \delta(q_0,d) = \{q_d^{ad}\}
$$ for the newly added initial state and they are inherited from the respective automata for
 each other state.

Now, for any input, $A$ nondeterministically guesses in which of the three languages the input is: as $\tau(q_0) = \Sigma$ it has access to any occurrence of a $b$, a $c$ or a $d$ as transitions are defined for these letters in $q_0$.
 By guessing the input belonging to $L^{ab}$, it should have at least one $b$, thus by reading a $b$ in the first step, the subautomaton accepting $L^{ab}$ is chosen such that a $b$ has already been processed. Now, it is easy to see that after this nondeterministic choice in the first step, the computation continues in a deterministic manner.
The other nondeterministic choices in the first step of the computation are: by consuming a $c$ anywhere from the input $A$ chooses to check whether the input belongs to $L^{ac}$, and by consuming a $d$ anywhere from the input in the first step of the computation, $A$ chooses to check whether the input belongs to $L^{ad}$.
 If the guess was correct, the input will be accepted. Otherwise, $A$ must use another computation to accept the given input, if any.  Consequently the DFAwntl $A$ accepts
 $L^{ab} \cup L^{ac} \cup L^{ad}$.
\end{example}

We left open if NFAwntl can accept more languages than DFAwntl (the properness of the inclusion relation of these two language classes is open).
It is known (see \cite{IJCM}, for the proof) that DFAwtl cannot accept all rational trace languages, and this fact was used to prove the properness of the hierarchy between NFAwtl and DFAwtl. Actually, it was shown that the language $\{ w\in\{a,b\}^*~|~ |w|_a = |w|_b \text{ or } 2|w|_a = |w|_b \}$ cannot be accepted by any DFAwtl.
Here we show that with a similar method as in Example \ref{exa6}, DFAwntl is able to accept this language. The automaton is shown in Figure~\ref{figure:Hier}. In each state the indicated set, if any, shows the translucent letters of the given state. The other notation is standard, e.g., double circles for final states, etc.

\begin{figure}[!ht] 
  \centering
  \includegraphics[scale=0.85]{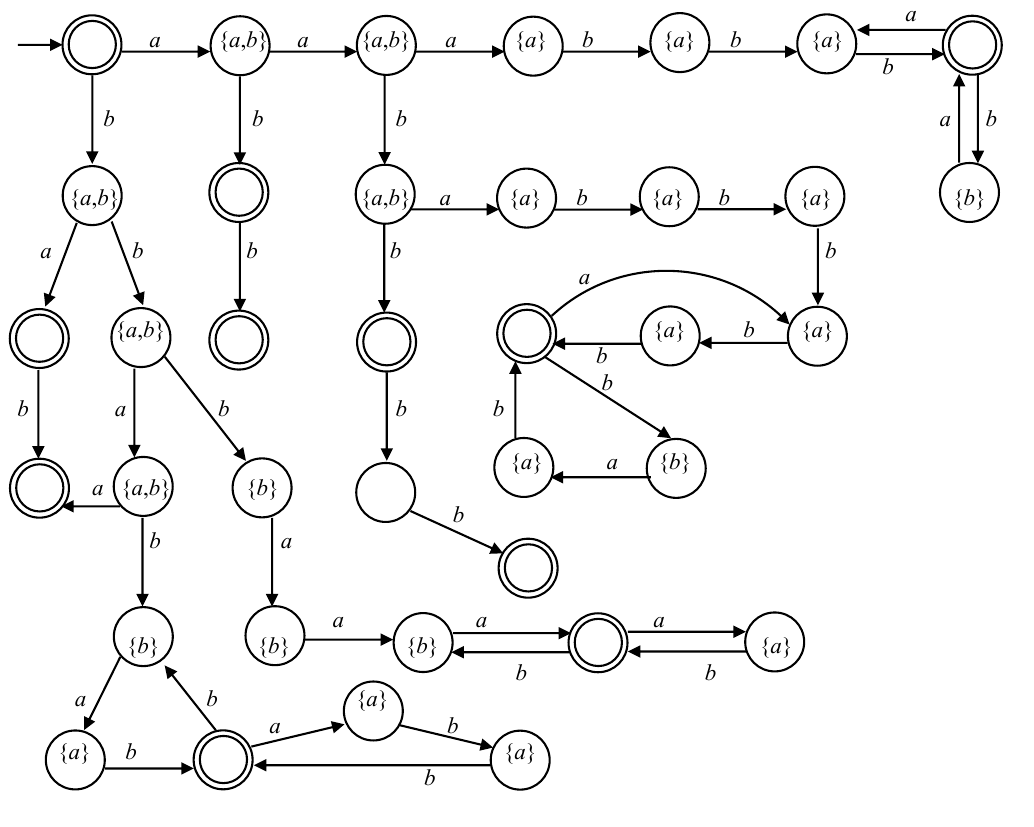}
  \caption{A DFAwntl accepting $\{ w\in\{a,b\}^*~|~ |w|_a = |w|_b \text{ or } 2|w|_a = |w|_b \}$.}\label{figure:Hier}
\end{figure}

Last, but not least, we present some closure properties.

\begin{proposition} The class of
languages accepted by NFAwntl is closed under union. \end{proposition}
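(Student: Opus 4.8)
The plan is to prove closure under union by the standard disjoint-union construction for nondeterministic automata, adapted to the NFAwntl setting. Suppose $L_1$ and $L_2$ are accepted by NFAwntl $A_1 = (Q_1,\Sigma,\$,\tau_1,I_1,F_1,\delta_1)$ and $A_2 = (Q_2,\Sigma,\$,\tau_2,I_2,F_2,\delta_2)$ respectively. Without loss of generality I would first assume $Q_1 \cap Q_2 = \emptyset$ (rename states if necessary). Then I would build a single automaton $A = (Q_1 \cup Q_2, \Sigma, \$, \tau, I_1 \cup I_2, F_1 \cup F_2, \delta)$, where $\tau$ and $\delta$ simply restrict to $\tau_i$ and $\delta_i$ on the states of $Q_i$. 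Because NFAwntl are allowed multiple initial states, the nondeterministic choice of the starting state already selects which of the two component automata will run on the input.

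The core of the argument is to verify that $L(A) = L_1 \cup L_2$. First I would observe that $A$ is a legitimate NFAwntl: the defining septuple is well-formed, and since the NFAwntl model places no disjointness condition between $\tau$ and $\delta$, there is nothing extra to check beyond that each component's maps are inherited unchanged. The key structural point is that the state sets are disjoint and no transition ever crosses from $Q_1$ to $Q_2$ or vice versa, so once a computation commits to an initial state $q_0 \in I_i$, every configuration it reaches has its state in $Q_i$, and the sequence of translucency tests and erasures it performs is \emph{exactly} a computation of $A_i$ on the same input. Conversely, any accepting computation of $A_i$ is available in $A$ by choosing the same initial state. Hence $w \in L(A)$ iff $w$ has an accepting computation starting in some $q_0 \in I_1 \cup I_2$, which happens iff $w \in L_1$ or $w \in L_2$. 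The empty-word case is handled identically: $\lambda \in L(A)$ iff some initial state is final, i.e. $I_1 \cap F_1 \neq \emptyset$ or $I_2 \cap F_2 \neq \emptyset$, matching $\lambda \in L_1 \cup L_2$.

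I do not expect a genuine obstacle here, since the NFAwntl model permits multiple initial states and uses no global synchronization that the disjoint copies could violate. The one point requiring a little care — and the only place where the nondeterministic translucency feature could in principle interfere — is confirming that merging the two transition relations and translucency mappings does not create spurious interactions: because the two state sets are disjoint and each $\delta_i$ has range in $2^{Q_i}$, no computation can ``leak'' between the components, and the semi-translucent behaviour of each component is preserved verbatim. This confinement is what makes the construction work for NFAwntl just as it does for ordinary NFA, and it is the single fact I would state explicitly before concluding the equality of languages.
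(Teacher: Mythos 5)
Your construction is exactly the one the paper uses: the disjoint union of the two NFAwntl with $I_1\cup I_2$ as initial states and $F_1\cup F_2$ as final states, with $\tau$ and $\delta$ inherited componentwise, so that the choice of initial state selects which component runs. The proposal is correct and matches the paper's proof; your extra remarks on non-leakage between components only make explicit what the paper leaves implicit.
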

\begin{proof} Wlog. we may assume that the two languages are over the same alphabet $\Sigma$. Now,
having two NFAwntl, say $A_1 = (Q_1,\Sigma,\$,\tau_1,I_1,F_1,\delta_1)$ and $A_2 = (Q_2,\Sigma,\$,\tau_2,I_2,F_2,\delta_2)$ with $Q_1 \cap Q_2 = \emptyset $, we construct $A = (Q_1\cup Q_2,\Sigma,\$,\tau,I_1\cup I_2,F_1\cup F_2,\delta)$, where $\tau(q) =
 \left\{
         \begin{array}{ll}
           \tau_1(q), & \hbox{if $q\in Q_1$;} \\
           \tau_2(q), & \hbox{if $q\in Q_2$.}
         \end{array}
       \right.
$ and for each $a\in\Sigma$, $\delta(q,a) =
 \left\{
         \begin{array}{ll}
           \delta_1(q,a), & \hbox{if $q\in Q_1$;} \\
           \delta_2(q,a), & \hbox{if $q\in Q_2$.}
         \end{array}
       \right.
$.
 Then $A$ may choose nondeterministically among the possible initial states,
depending on if the chosen state is in $I_1$ or $I_2$, $A$ will do a computation that is similar to a computation of $A_1$ or $A_2$, respectively. \end{proof}

It is known that the language class accepted by DFAwtl is not closed under union \cite{JCSS}. On the other hand,
we have also seen, that DFAwntl may also be able to compute the union of some languages accepted by DFAwntl, however, in general we leave open the problem if this class is closed under union.

On the other hand, the language class of NFAwtl is closed under concatenation, and the proof was based on guessing when the last occurrence of the letters are consumed to give a construction when the last occurrence of any letter was consumed without using any translucency \cite{JCSS}. As in the new model, generally, we may not be sure when the last occurrence is consumed (maybe even in the first step), the original construction definitely does not work. Thus, the closure of the new classes under concatenation is also left as an open problem.

The fact that each accepted language must have a letter equivalent regular sublanguage and the examples shown lead also to the following non-closure property:

\begin{proposition}
Language classes accepted by NFAwntl and DFAwtl are not closed under intersection with regular sets, and thus they are not closed under intersection.
\end{proposition}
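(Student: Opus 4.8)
The plan is to exhibit a single concrete language that is accepted by an NFAwntl (and, ideally, by a DFAwtl, so that both claims in the proposition are covered simultaneously), whose intersection with a suitable regular set produces a language with no letter-equivalent regular sublanguage. The workhorse result is the Corollary proved earlier: every language accepted by an NFAwntl has a letter-equivalent regular sublanguage (and the analogous statement for NFAwtl/DFAwtl given in the Lemma). So if the class were closed under intersection with regular sets, every such intersection would again be forced to have a letter-equivalent regular sublanguage; finding one intersection that violates this yields the non-closure.

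First I would reuse the example from Section~\ref{sec:stateD}: the DFAwtl accepting the language $L$ with $\tau,\delta$ running through states $q_0,q_1,q_2,q_3$, whose intersection $L\cap a^*d^*c^*b^*=\{a^nd^nc^nb^n\mid n\ge 0\}$ was already shown to be non-context-free. The key point for the present proposition is sharper: $\{a^nd^nc^nb^n\mid n\ge 0\}$ has \emph{no} letter-equivalent regular sublanguage, because any infinite subset still forces all four letter-counts equal, and the Parikh image $\{(n,n,n,n)\mid n\ge 0\}$ is a single linear set that cannot be matched by any nonempty regular language unless the regular language's own Parikh image lands inside it --- but a nonempty regular sublanguage of $\{a^nd^nc^nb^n\}$ either is finite (hence its languages all have letter-equivalent finite, thus regular, sublanguages, so I must rule this out) or pumps in a way that breaks the equal-count condition. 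I would spell out that a regular sublanguage here is necessarily finite, so it does have a letter-equivalent regular sublanguage, which means this particular intersection does \emph{not} by itself contradict the Corollary. This is the main obstacle: I must choose the witness language so that the intersection has an \emph{infinite} Parikh image that is still not letter-equivalent to any regular language.

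To fix this, I would instead take the DFAwtl from Example~\ref{exa:nmkj} accepting $L^{ab}=\{w\mid |w|_a=|w|_b>0,\ |w|_c=|w|_d\}$ and intersect with the regular set $a^*b^*$, giving $\{a^nb^n\mid n\ge 1\}$; again this has an infinite but diagonal Parikh image $\{(n,n)\mid n\ge1\}$. The decisive fact, already used in the excerpt to separate NFAwntl from the context-free classes, is that $\{a^nb^n\}$ has no letter-equivalent regular sublanguage: any nonempty sublanguage contributes Parikh vectors on the diagonal $|w|_a=|w|_b$, and a regular language whose Parikh image is an infinite subset of the diagonal $\{(n,n)\}$ would, by the pumping lemma applied to the regular language, have to contain a vector off the diagonal, a contradiction; a finite regular sublanguage has bounded counts and so its own Parikh image cannot cover an infinite diagonal, so it is not letter-equivalent to $\{a^nb^n\mid n\ge1\}$. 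Hence $\{a^nb^n\mid n\ge1\}$ admits no letter-equivalent regular sublanguage, so by the Corollary it is not accepted by any NFAwntl, and a fortiori not by any DFAwtl.

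Putting the pieces together, the proof runs: $L^{ab}$ is accepted by a DFAwtl (Example~\ref{exa:nmkj}), hence by an NFAwntl; the set $a^*b^*$ is regular; their intersection is $\{a^nb^n\mid n\ge1\}$; this has no letter-equivalent regular sublanguage; therefore by the Corollary it is accepted by no NFAwntl, and since every DFAwtl is accepted-class-wise contained here, by no DFAwtl either. This contradicts closure under intersection with regular sets for both classes. Closure under intersection with regular sets is a special case of closure under intersection (taking one argument regular), so non-closure under intersection with regular sets immediately gives non-closure under intersection. The one step demanding care, and the part I would write out in full, is the clean statement that $\{a^nb^n\mid n\ge1\}$ has no letter-equivalent regular sublanguage --- distinguishing the finite and infinite cases of a candidate regular sublanguage --- since everything else is assembling results already established in the excerpt.
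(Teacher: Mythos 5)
Your final argument is correct and is essentially the paper's own proof: the paper obtains this proposition from the Lemma that every NFAwntl-language has a letter-equivalent regular sublanguage together with ``the examples shown,'' which is exactly your strategy of intersecting $L^{ab}$ (accepted by a DFAwtl, hence by an NFAwntl) with $a^*b^*$ to get $\{a^nb^n\mid n\ge 1\}$, a language the paper itself notes has no letter-equivalent regular sublanguage.

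One point needs correcting, even though you abandon that branch: your claim that $\{a^nd^nc^nb^n\mid n\ge 0\}$ ``does have a letter-equivalent regular sublanguage'' because all of its regular sublanguages are finite is backwards. A finite language can never be letter-equivalent to an infinite one, since letter-equivalence forces the two languages to have the same set of Parikh vectors; so the finiteness of every regular sublanguage shows precisely that $\{a^nd^nc^nb^n\}$ has \emph{no} letter-equivalent regular sublanguage. Indeed the paper uses exactly this witness, intersected with $a^*d^*c^*b^*$, to prove the analogous non-closure for state-deterministic FAwtl, so your first example would have worked here as well; the switch to $\{a^nb^n\}$ is unnecessary but harmless. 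In the $\{a^nb^n\}$ verification, also note that pumping an infinite regular sublanguage can break membership in two ways --- by moving the Parikh vector off the diagonal or by leaving $a^*b^*$ --- and both cases should be covered.
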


\section{Conclusions} %

Recently, another extension of the finite automata with translucent letters was investigated in which in the computation the head is not restarting after erasing a symbol, but continues from the position where this letter has been erased (or by reaching the endmarker, it starts from the beginning again) \cite{Fran-Fred}. This model is defining some new interesting classes of languages that are superclasses of the classes of languages of the original model,  %
as the new model is able to simulate the original nondeterministic finite automata with translucent letters. %
Our extensions, the FAwntl, are such extensions that the original deterministic and nondeterministic finite automata with translucent letters are special cases of our new models (we may not need to simulate them as they are included in our new classes of automata), thus
the computational power of the original models has been increased by relaxing the condition
of disjointness of the sets of letters for a state which is containing the translucent letters of the given state and which is containing the letters that can be read in the given state. However, we left open if nondeterministic transitions are more powerful in case we allow nondeterministic translucency (the author guesses/conjectures that the model DFAwntl is weaker than NFAwntl in terms of accepted languages). It is also left open if DFAwntl are able to accept all rational trace languages.

Although the expressive power has been increased, the new model still has various limitations, as the accepted languages must always have a letter equivalent regular sublanguage. As the class is not closed under intersection with regular languages,
 transduced-input variants can be investigated and studied in the future
similarly to \cite{CRBulg}. Various closure properties of the new classes of languages are left open, they are also subjects of future studies.

We believe that the combination of the new directions by continuing the computation from the position of the erased letter and by using nondeterministic translucency, can fruitfully be considered also in the future.

In the other newly investigated model we have applied the state-deterministic restriction
for FAwtl showing that this model is accepting an interesting family of languages. Combining state-determinism and nondeterministic translucency and/or the non-returning feature could also be a nice topic for future research.

\section*{Acknowledgements} The comments of the anonymous reviewers are gratefully acknowledged.

\nocite{*}
\bibliographystyle{eptcs}
\bibliography{translucent_bibliography}
\end{document}